\newtheorem{lemma}{Lemma}
\newtheorem{theorem}{Theorem}
\newtheorem{claim}{Claim}
\newtheorem{remark}{Observation}
\newtheorem{corollary}{Corollary}
\theoremstyle{definition}
\newtheorem{definition}{Definition}
\newcommand{\EPTsum}{\mathsf{EPT\text{-}sum}}
\title{Tight Approximation Bounds on a Simple Algorithm for Minimum Average Search Time in Trees}
\author{Svein Høgemo\thanks{E-mail: svein.hogemo@uib.no}\\ University of Bergen}
\begin{document}

\maketitle

\begin{abstract}
The graph invariant EPT-sum has cropped up in several unrelated fields in later years: As an objective function for hierarchical clustering, as a more fine-grained version of the classical edge ranking problem, and, specifically when the input is a vertex-weighted tree, as a measure of average/expected search length in a partially ordered set. The EPT-sum of a graph $G$ is defined as the minimum sum of the depth of every leaf in an edge partition tree (EPT), a rooted tree where leaves correspond to vertices in $G$ and internal nodes correspond to edges in $G$.

A simple algorithm that approximates EPT-sum on trees is given by recursively choosing the most balanced edge in the input tree $G$ to build an EPT of $G$. Due to its fast runtime, this balanced cut algorithm can be used in practice, and has earlier been analysed to give a 1.62-approximation on trees. In this paper, we show that the balanced cut algorithm gives a 1.5-approximation of EPT-sum on trees, which amounts to a tight analysis and answers a question posed by Cicalese et al. in 2014.
\end{abstract}

\section{Introduction}

Searching in ordered structures is a basic problem in computer science and has seen a lot of attention since the dawn of the field \cite{hib1962,NR1972}. Some special cases are well understood, e.g. the best search strategy in a totally ordered set is a binary search tree, and the problem of optimal search in totally ordered sets given a probability distribution on the elements was treated by Knuth \cite{knu1971} and Hu and Tucker \cite{HT71} in 1971. Also, if the search space is every single subset of elements, then it is known that the optimum search strategy is given by a Huffman tree (see \cite{HT71}). Apart from such specific subcases, little is known except that they are hard problems \cite{CJLM14,LR76}.\\

Searching for an element $x$ in a \emph{partially ordered set} $(A,\preceq)$ can be achieved by a series of queries of the type ``is $x\preceq a$ (for some element $a$)?'', and restricting the next query to the subset $\{a'\in A \mid a'\preceq a\}$ (or $\{a'\in A \mid a'\not\preceq a\}$, depending on the answer to the query). The search finishes when your eligible subset of $A$ contains only one element, which must be $x$.

An interesting subclass of partially ordered sets are ``tree-like posets'', those posets whose Hasse diagrams are rooted trees, or in other words those posets where for any two elements $a,b$, the sets $\{x\mid x\preceq a\}$ and $\{x\mid x\preceq b\}$ either are disjoint, or one is contained in the other. They generalize totally ordered sets (whose Hasse diagrams are paths), but are still much easier to work with than the general case of posets. The average performance of search strategies for tree-like posets have been studied by Laber and Molinaro in \cite{laber2011approximation} and later by Cicalese et al. in \cite{CJLM11,CJLM14}. For these posets, the optimum search strategy with respect to average search time can be shown to be equal to the \emph{EPT-sum} of their Hasse diagrams, a graph invariant recently named in \cite{HBBPT21}, but defined and used several times earlier in other circumstances.

\subsection{EPT's and EPT-sum}

An \emph{edge partition tree}, or EPT for short, of a graph $G$ is a rooted tree where every leaf corresponds to a vertex in $G$, the root corresponds to an edge $e$ in $G$, and each child of the root is itself the root of an EPT of a component of $G\setminus e$. The EPT-sum of $G$ with respect to some EPT $T$ is defined as the sum of the depth of each leaf in $T$; and the EPT-sum of $G$ is defined as the minimum over all EPT's. Viewing the EPT as a certain search strategy in the graph $G$ where, upon each edge $e$, one can query which of the components of $G\setminus e$ contains the wanted vertex, the EPT-sum of $G$ with respect to $T$ is the combined search time of locating every vertex in $G$ with this search strategy. Giving a probability distribution to $V(G)$, the weighted EPT-sum is equal to the expected search time for a vertex pulled at random from the given distribution. When $G$ is a tree, this search strategy is equivalent to the usual search strategy on the tree-like poset that has $G$ as its Hasse diagram. The aforementioned edge rank problem, on the other hand, is interpreted as finding a search strategy with optimal \emph{worst-case} performance on a tree-like poset.\\

Given a tree $G$, we are interested in the following algorithm (the \emph{balanced cut algorithm}) to make an approximately optimal EPT of $G$ (with respect to EPT-sum):
\begin{itemize}
\item Find a balanced cut in $G$, i.e. an edge $e$ that minimizes the size of the biggest component in $G\setminus e$.
\item Make $e$ the root of $T$.
\item Make EPT's of the components of $G\setminus e$ recursively, and make the roots of those trees children of $e$.
\end{itemize}

The balanced cut algorithm is attractive because of its runtime: Finding a balanced edge in a tree takes $O(n)$ time, so making a balanced EPT can trivially be done in $O(n^2)$ time. However, this runtime can be reduced to $O(n \log n)$, as shown in Theorem \ref{thm:runtime}. This makes it practical for any potential applications of searching in trees.\\

For the case of vertex-weighted trees, Cicalese et al.~\cite{CJLM11,CJLM14} show a lower and upper bound on the performance of the balanced cut algorithm of 1.5 and $\varphi$, respectively (where $\varphi$ is the golden ratio $\frac{\sqrt{5}+1}{2}$, approximately equal to 1.62). They ask what the actual performance of this algorithm is, and conjecture that it does provide a 1.5-approximation, matching the lower bound.\\

In this paper we will affirm this conjecture, by proving that the balanced cut algorithm indeed does give a 1.5-approximation of $\EPTsum$ on vertex-weighted trees. Our proof proceeds by constructing a so-called \emph{augmented tree}, built from an optimum EPT by subdividing half of its edges, and showing that it has $\EPTsum$ at most 1.5 times the optimum. Thereafter we iteratively apply operations to this tree that are guaranteed to not increase the $\EPTsum$, eventually arriving at the EPT constructed by the balanced cut-algorithm.

\subsection{Applications of EPT-sum to Hierarchical Clustering}

Employing edge-weighted graphs, a measure equal to EPT-sum was considered by Dasgupta in \cite{Das16} as a suitable objective function for measuring the quality of hierarchical clusterings of similarity graphs (where large weights mean high similarity and vice versa). In this paper and a string of follow-ups \cite{CC17,CKM+19,RoyPokutta}, several attractive properties of EPT-sum as an objective function were highlighted, and approximation algorithms for EPT-sum were found. In light of this, a balanced edge of a tree is just a special case of the \emph{sparsest cut} of an edge-weighted graph, a partition $(A,B)$ of $V(G)$ that minimizes the ratio $(\sum_{e\in E(G[A,B])} x_e)/(|A|\cdot|B|)$ (where $x_e$ is the weight of the edge $e$). 

Charikar and Chatziafratis show that the balanced cut algorithm gives an 8-approximation of the $\EPTsum$ of edge-weighted trees (and other graph classes for which an optimal sparsest cut can be found in polynomial time, like planar graphs, see~\cite{abboud2020new} for more information). Whether this is the real approximation ratio is, however, unknown.

\subsection{Related Parameters}

EPT's were introduced by Iyer et al. in \cite{IYER199143} along with the \emph{edge rank problem}; the problem of finding an EPT of a graph with lowest possible height. EPT's are a variation of \emph{vertex partitioning trees} or VPT's, defined similarly but where every node in the tree corresponds to a vertex in the graph. VPT's and the \emph{node rank problem} (also called \emph{tree-depth}) of finding a VPT of lowest height have applications in such disparate areas as VLSI chip design \cite{leiserson1980area}, sparse matrix multiplication \cite{BP1993,liu1990}, and structural graph theory \cite{Nesetril015,NESETRIL20061022}. The edge rank problem was rediscovered in terms of searching in tree-like posets by Ben-Asher et al.~\cite{BFN99}, who also found a polynomial-time algorithm for the problem (although polynomial- and linear-time algorithms were already known for edge ranking, see \cite{de1995optimal,LamYue2001}) The equivalence between edge rank of trees and searching in tree-like posets was first pointed out by Dereniowski in \cite{Der2008}.\\

There are also several relevant results on the related parameter of \emph{VPT-sum}; the VPT-sum of a graph $G$ with respect to a \emph{vertex partition tree} (\emph{VPT}) $T$ is the sum of depth plus one of every node in the VPT. Optimizing VPT-sum is, on trees, equivalent to optimizing a different search model where one has access to an oracle which, for each vertex $v\in V(G)$, answers which of the components of $G\setminus v$ (or, potentially, $\{v\}$ itself) contains the vertex one is searching for. For this parameter, the corresponding strategy of recursively choosing the most balanced vertex (i.e. the \emph{centroid}) to build a VPT, was recently shown by Berendsohn et al.~\cite{berendsohn2022fast} to give exactly a 2-approximation of VPT-sum -- both for weighted and unweighted trees. In addition, there exists a PTAS for VPT-sum on trees~\cite{BK22}.\\

EPT-sum and the related parameters that are treated in \cite{HBBPT21} are all NP-hard to calculate exactly on general, unweighted graphs. In contrast, on unweighted trees, the complexity of EPT-sum and VPT-sum are unknown. For vertex-weighted trees, the complexity of calculating VPT-sum is also, to the best of our knowledge, unknown, but it is NP-hard to calculate EPT-sum~\cite{CJLM11}. One should note that in the NP-hardness reductions employed in~\cite{CJLM11}, the trees have exponential weights. In \cite{HBBPT21}, an equivalence between EPT-sum on vertex-weighted trees and unweighted trees was demonstrated. So, if EPT-sum should turn out to be NP-hard on trees where every vertex has polynomial weight, then it is also NP-hard on unweighted trees.\\

Independently of the result itself, we believe that there might be some interest in the proof technique used here. Specifically, the trick of building an augmented tree that has cost 1.5 times the optimum has, to our knowledge, not been used before.

\subsection{Organization}

The rest of the paper is organized as follows: in Section 2 we give basic definitions. We also show that there is an algorithm for making balanced EPT's that run in $O(n \log n)$ time. In Section 3 we prove the main result of this paper, namely the fact that a balanced EPT has EPT-sum at most 1.5 the optimum. Finally, in Section 4 we restate some of the most relevant results and open problems regarding computing and approximating EPT-sum on trees.

\section{Preliminaries}

\subsection{Basic notions}

In general, we follow the notation established in \cite{HBBPT21}. The vertex set and edge set of a graph $G$ is denoted $V(G)$ and $E(G)$, respectively. Given vertices $S\subseteq V(G)$, $G[S]$ signifies the induced subgraph of $G$ on $S$. As both the input graphs and the output data structures are trees, we employ the following conventions to avoid confusion: An unrooted tree that is given as input is denoted $G$, while a rooted tree generated as output is denoted $T$, and its vertices are called \emph{nodes}. For a node $x\in V(T)$, $T[x]$ signifies the rooted subtree of $T$ rooted in $x$. The set of \emph{leaves} in $T$ is denoted $L(T)$, a node which is not a leaf is called an \emph{internal node}. A \emph{full binary tree} is a binary tree where every internal node has exactly two children.

The \emph{depth} of a node $x$ in a rooted tree $T$ is equal to the distance from the root $r$ to $x$ in $T$. For example, $r$ itself has depth 0. The \emph{height} of $T$ is the maximum depth out of any node in $T$. The nodes on the path from $r$ to $x$ (including both $r$ and $x$) are the \emph{ancestors} of $x$, and $x$ is a \emph{descendant} of all these nodes.

\subsection{Edge Partition Trees}

Given a connected graph $G$, an \emph{edge partition tree}, or EPT for short, of $G$ is a rooted tree with $|V(G)|$ leaves and $|E(G)|$ internal nodes, that can be defined inductively as follows:
\begin{itemize}
\item The EPT of a graph with a single vertex is a tree with a single node.
\item For a bigger graph $G$, the root $r$ of any EPT $T$ corresponds to an edge $e\in E(G)$, and $r$ has at most two children $c_1,c_2$; one for each component of $G\setminus e$. $T[c_1]$ (and $T[c_2]$, if it exists) are, in turn, EPT's of the component(s) of $G\setminus e$.
\end{itemize}

\begin{figure}[ht!]
\centering
\includegraphics[width=0.6\textwidth]{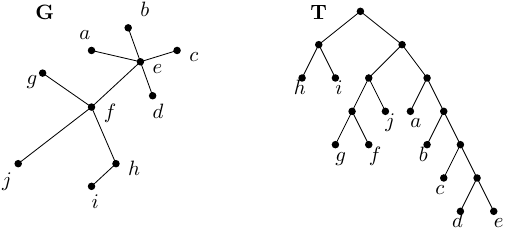}
\caption{An unweighted tree $G$ and an EPT $T$ of $G$. Adding up the depth of each leaf, one sees that $\EPTsum(G,T) = 39$. This is not optimal for $G$; a better EPT can be made by making the edge $ef$ (which incidentally is also the most balanced edge in $G$) root.}
\end{figure}

An EPT is, as such, a binary tree. A graph $G$ is a tree if and only if any (and every) EPT of $G$ is a full binary tree (since in that case, every edge of $G$ is a bridge). Since we focus on trees as input, we assume that EPT's are full binary trees.

There is a bijection from the vertex set of $G$ to the leaves of $T$, and likewise from the edge set of $G$ to the internal vertices of $T$. To simplify notation somewhat, we will not notate these bijections and rather say that each leaf of $T$ \emph{is} a vertex of $G$ and each internal node of $T$ \emph{is} an edge of $G$. For an internal node $v\in T$, we define $G[v]$ as a shorthand for $G[L(T_v)]$, the (connected) subgraph of $G$ induced by the rooted subtree under $v$.

\subsection{EPT-sum and Searching in Trees}

The focus of this paper is on the graph measure called EPT-sum. We give the most general definition, where $G$ is equipped with both edge and vertex weights:

Given a graph $G$ with vertex weights $W = \{w_v\mid v\in V(G)\}$ and edge weights $X = \{x_e\mid e\in E(G)\}$, and an EPT $T$ of $G$, we define the EPT-sum of $G$ with respect to $T$ as follows:

\begin{definition}[EPT-sum]\label{def:EPT1}
$\EPTsum(G,T)$ is equal to $$\sum_{e\in E(G)} x_e\cdot \left(\sum_{v\in L(T[e])} w_v\right)$$
If $G$ is not vertex-weighted, we can replace the sum of vertex-weights by $|L(T[e])|$.

The EPT-sum of $G$, $\EPTsum(G)$ is the minimum of $\EPTsum(G,T)$ over all EPTs $T$.
\end{definition}

\begin{definition}[EPT-sum, alternative def.]\label{def:EPT2}
We can easily verify through reordering of terms that EPT-sum can be equivalently defined as follows:
$$\sum_{v\in V(G)} w_v\cdot \left(\sum_{e\in \mathrm{\:ancestors\:of\:} v} x_e\right)$$
If $G$ is not edge-weighted, we can replace the sum of edge-weights by the depth of $v$ in $T$.
\end{definition}

The graphs we will treat in this paper are vertex-weighted, but not edge-weighted. Therefore the formula $\sum_{v\in V(G)} (w_v\cdot dist_T(r,v))$, where $r$ is the root of $T$, is a natural choice to use. It is also this formulation that makes EPT-sum and similar measures attractive for measuring the performance of search trees.

We must note that the equivalence between EPT-sum and average search time only holds if the Hasse diagram actually is a rooted tree; otherwise these are two different problems. This is evident from the fact that the sparsest cut algorithm from \cite{CC17} provides a $O(\sqrt{\log n})$-time approximation of EPT-sum in edge-weighted graphs, while it was proven in \cite{CJLM11} that searching in posets (i.e. vertex-weighted DAG's) has no $o(\log n)$-approximation unless every problem in NP admits a quasipolynomial algorithm (see also \cite{LN04}).

\subsection{A Fast Balanced Cut Algorithm}

Before moving on to the main result, we show that finding a balanced EPT can be done more quickly than the basic algorithm outlined in the introduction. This has probably been noted beforehand; however we have been unable to find such a result in the literature.

\begin{theorem}\label{thm:runtime}
Given a tree $G$, one can compute a balanced EPT of $G$ in time $O(n \log n)$.
\end{theorem}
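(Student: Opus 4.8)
The plan is to reduce each balanced-cut computation to a centroid query, and then to argue that all such queries can be answered in $O(n\log n)$ total by maintaining the tree in a dynamic data structure as the recursion proceeds. The first step is a structural characterization of the balanced edge. Consider the current component $H$ of size $m$, and let $c$ be a \emph{centroid} of $H$, i.e.\ a vertex such that every component of $H\setminus c$ has size at most $m/2$; let $C_1,\ldots,C_k$ be these components with sizes $m_1\ge\cdots\ge m_k$. I claim a balanced edge of $H$ is the edge joining $c$ to (a copy of) the heaviest branch $C_1$. Indeed, cutting this edge yields parts of sizes $m_1\le m/2$ and $m-m_1$, so the larger part has size $m-m_1$. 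Any edge \emph{not} incident to $c$ cuts off a proper subtree of some $C_i$, of size $t<m_i\le m_1$; the larger part then has size $m-t>m-m_1$, which is strictly worse. Hence some balanced edge is incident to $c$ and cuts off a heaviest branch, so finding it reduces to locating the centroid and its heaviest incident subtree. (When $H$ has two adjacent centroids, this recovers the perfectly balanced edge between them.)

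Second, I would control the total cost with a charging argument. The side that is cut off, $C_1$, is the \emph{smaller} side, of size at most $m/2$. Thus along the root-to-leaf path of any vertex $v$ in the output EPT, the component containing $v$ at least halves every time $v$ lands on the smaller side of a cut, so $v$ lies on the smaller side at most $\lfloor\log_2 n\rfloor$ times. Summing over all vertices gives $\sum_{\text{cuts}}(\text{size of smaller side})=O(n\log n)$. Consequently, if each balanced-cut step can be performed in time proportional to its smaller side plus an $O(\log n)$ overhead, the total running time is $O(n\log n)$.

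Third comes the implementation. The naive obstacle is that after cutting off a small branch $C_1$, recomputing the centroid and subtree sizes of the much larger remaining part $H\setminus C_1$ from scratch costs $\Theta(m)$, which ruins the bound; the star $K_{1,n-1}$ is the extreme case, where each step peels off a single vertex and a rescan would give $\Theta(n^2)$. I would avoid this by maintaining $G$ in a dynamic tree structure (link-cut trees, or top trees) supporting, in $O(\log n)$ time each, the operations of cutting an edge, reporting which component contains a given vertex, and querying the centroid of a component together with the edge to its heaviest branch. Each cut then costs $O(\log n)$; recursing on the two resulting components (identified by the data structure) over the $n-1$ cuts yields $O(n\log n)$ overall. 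In this formulation the charging argument is not strictly needed, but it explains why logarithmic overhead per vertex is the right target.

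I expect the main obstacle to be this third step: verifying that a centroid / heaviest-branch query is genuinely supported in $O(\log n)$ by the chosen dynamic tree structure, and handling the bookkeeping of which maintained tree corresponds to which recursive subproblem after each cut. The structural characterization of the first step is what makes the whole approach feasible, since it converts ``minimize the size of the largest component of $H\setminus e$'' into a single centroid-and-heaviest-branch query rather than a scan over all $m$ edges of $H$.
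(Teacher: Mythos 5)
Your structural step is correct and matches the paper's starting point: the paper's Claim~\ref{cl:centroid1} proves that every balanced edge is incident on a centroid, and your refinement (it is the edge from the centroid to a \emph{heaviest} branch) is also right and is what the paper uses implicitly when it sorts the branches by decreasing size. Your charging argument in the second step is likewise sound. The problem is that you then discard the charging argument and rest the entire time bound on the third step, which you yourself flag as unverified: that a dynamic forest structure supports, in $O(\log n)$ per operation, a query returning the centroid of a component \emph{together with the edge to its heaviest incident branch}. The centroid half of this is defensible (top trees support median queries via nonlocal search, and the median of a unit-weighted tree is the centroid), but ``heaviest branch at a possibly $\Omega(n)$-degree vertex'' is not an off-the-shelf top-tree or link-cut-tree operation; it requires designing the cluster annotations and the binary search over the edge bundles around $c$, and until that is done the proof does not close. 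As written, the argument reduces the theorem to an unproved data-structure claim rather than establishing it.

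The paper avoids this machinery entirely with one extra structural observation (its Claim~\ref{cl:centroid2}): after cutting off the heaviest branch, the \emph{same} vertex $c$ remains a centroid of the large remaining component as long as that component still has at least $\tfrac{2n}{3}$ of the vertices. This lets it batch, in a single linear-time pass at $c$, all the successive balanced cuts $cv_1,\ldots,cv_s$ (arranged as a caterpillar in the EPT) until every remaining piece has size below $\tfrac{2n}{3}$; the recursion tree then has depth $O(\log n)$ with the pieces at each level disjoint, so $O(n)$ work per level suffices. The only nontrivial implementation point is sorting the branch sizes at $c$ in linear time, which the paper does by noting that disjoint branches summing to $n-1$ can take only $O(\sqrt n)$ distinct sizes. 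If you want to salvage your route elementarily, the centroid-persistence observation is exactly the missing ingredient that resolves your ``star $K_{1,n-1}$'' worry without dynamic trees: in the star, the centroid never moves, and all $n-1$ cuts are found in one pass.
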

\begin{proof}
It is a well-known fact that every tree on $n$ vertices contains a \emph{centroid} vertex: a vertex where, if you remove it, the remaining subtrees all contain at most $n/2$ vertices. We derive the following two claims:

\begin{claim}\label{cl:centroid1}
Let $G$ be a tree, and let $c$ be a centroid in $G$. Every balanced edge in $G$ is incident on $c$.
\end{claim}
\emph{Proof of claim.} Any edge $e$ that is not incident on $c$ is contained in some component $C$ of $G\setminus c$. Therefore one of the components of $G\setminus e$ is strictly contained within $C$. But $C$ is a smallest component of $G\setminus e_C$, where $e_C$ is the edge connecting $c$ with $C$. Therefore $e_C$ is strictly more balanced than $e$.

\begin{claim}\label{cl:centroid2}
Let $G$ be a tree, and let $e$ be a balanced edge in $G$. Furthermore, let $C$ be a biggest component of $G\setminus e$, and let $c$ be the endpoint of $e$ lying in $C$. If $|V(C)| \geq \frac{2n}{3}$, then $c$ is a centroid in $C$.
\end{claim}
\emph{Proof of claim.} From the previous claim, we find that $c$ must be a centroid in $G$. Let $D$ be the small component of $G\setminus e$ (that is, $D = G\setminus C$); by assumption, $|V(D)| \leq \frac{1}{3}|V(G)|$. Furthermore, for any component $D'$ of $G\setminus c$, $|V(D')| \leq |V(D)|$; otherwise, $e$ would not be the most balanced edge in $G$. Also, every component of $G\setminus c$, except $D$, clearly is a component of $C\setminus c$ as well. As such, every component of $C\setminus c$ has at most half as many vertices as $C$ itself, which implies that $c$ is centroid in $C$.\\

These two claims together show that you will always find a balanced edge by a centroid, and furthermore, if the largest component when that edge is removed has more than $\frac{2n}{3}$ vertices, we can find a balanced edge of the large component by the same centroid. Therefore, we can consider the following algorithm for building a balanced EPT of $G$:
\begin{enumerate}
\setcounter{enumi}{-1}
\item If $G$ consists of a single leaf $x$; output $x$ and terminate. Else, move on to step 1.
\item Locate a centroid vertex $c$ in $G$ and let $k$ denote the degree of $c$. Put all edges $cv_1,\ldots,cv_k$ incident on $c$ into a list.
\item Sort each edge $cv_i$ in decreasing order of $|C_i|$, where $C_i$ is the component of $G\setminus c$ containing $v_i$. (From now on, we assume that the labels $1,\ldots,k$ refers to this ordering of the edges.)
\item Find the minimum $s\leq k$ such that $(n-\sum_{i=1}^s |C_i|) < 2|C_s|$.
\item Make a caterpillar-shaped partial EPT $T$ with $cv_1$ as root and $cv_2,\ldots,cv_s$ downward.
\item Build onto $T$ by recursing on the subtrees left by removing $e_1,\ldots,e_s$ from $G$, output the finished EPT $T$ and terminate.
\end{enumerate}

The correctness of the above algorithm follows from Claims \ref{cl:centroid1} and \ref{cl:centroid2}, since all the edges $cv_1,\ldots,cv_s$ are guaranteed to be balanced edges of progressively smaller subtrees. We now argue that the algorithm runs in time $O(n \log n)$. It is a recursive algorithm, so a run of the algorithm can be visualized as a rooted tree, where the root represents $G$, and the children of the root represent the components of $G\setminus\{cv_1,\ldots,cv_s\}$ on which the algorithm recurses. We observe two things:
\begin{itemize}
\item Since every component of $G\setminus\{cv_1,\ldots,cv_s\}$ has size less than $\frac{2n}{3}$, the height of the recursion tree is $O(\log n)$.
\item The nodes at a particular ``level'' of the recursion tree (i.e. a set of nodes that lie equally far from the root) induce a partition of some subgraph of $G$ into subtrees.
\end{itemize}

From these two observations, we see that if the work done on a single node of the recursion tree can be achieved in linear time, then the algorithm as a whole runs in $O(n \log n)$ time, since the work on all the nodes on a single level of the recursion tree can be done in $O(n)$ time, and there are $O(\log n)$ levels. So this is what we are going to argue. The work done at each node of the recursion tree consists of steps 1 through 4 of the algorithm. Step 1 (finding a centroid and the edges incident on it) can be done in linear time by a standard algorithm that roots $G$ in an arbitrary vertex and calculates the size of each rooted subtree. Steps 3 (finding $s$) and 4 (building the partial EPT), as well as generating the subtrees to recurse over, are easy to see that we can do in linear time.

The runtime of step 2 (sorting the edges) is the one that we need to argue about. From step 1, we already know the size of every subtree $C_1,\ldots,C_k$, so we do not have to recompute those. The number $k$ of edges incident on a vertex $c$ can obviously be $\Omega(n)$ in the worst case. However, the number of different values that the numbers $|C_1|,\ldots,|C_k|$ can take is $O(\sqrt{n})$, because all components are disjoint and $\sum_{i=1}^k |C_i| = n-1$ (specifically, the number of different values must be upper bounded by the largest integer $t$ such that the triangular number $1+2+\ldots+t$ is smaller than $n$). Therefore it is possible to sort the edges $cv_1,\ldots,cv_k$ in time (and space) $\Theta(n)+\Theta(\sqrt{n}\log(n)) = O(n)$ by the following subroutine:
\begin{itemize}
\item Make an array $A$ of length $\lfloor\frac{n}{2}\rfloor$, where each element $A[1],\ldots,A[\lfloor\frac{n}{2}\rfloor]$ is a stack.
\item For each $1\leq i\leq k$, push $vc_i$ to the stack $A[|C_i|]$.
\item Fill a second array $B$ with the set $\{1\leq j\leq \lfloor\frac{n}{2}\rfloor \mid A[j]\text{ is non-empty}\}$, and sort $B$ in decreasing order. As we noted, the length of $B$ is $O(\sqrt{n})$.
\item Now, we can quickly fetch each edge $cv_i$ in decreasing order of $|C_i|$.
\end{itemize}
This completes the proof that the above algorithm runs in time $O(n \log n)$.
\end{proof}

\section{Balanced EPT's have EPT-sum at most 1.5 the Optimum}

We will go through the proof with unweighted graphs in mind, but note that the proof is agnostic as to whether the vertices have weights (see Corollary \ref{cor:weights}). The proof therefore also works for vertex-weighted trees.

The first step is building the tree that has $\EPTsum$ at most 1.5 times as high as the optimal tree.

\begin{definition}[Augmented tree]
Given a full binary tree $T$, the \emph{augmented tree} of $T$, denoted $aug(T)$, is constructed in the following manner: For any internal node $v\in V(T)$ with children $c_l,c_r$, choose one child (say, $c_r$) with the property that $|L(T[c_r])|\leq |L(T[c_l])|$ (this is obviously true for at least one of $c_l,c_r$). Then, subdivide the edge $vc_r$ once.
\end{definition}

If $T$ is an EPT of a tree $G$, then $aug(T)$ is not an EPT of $G$, since it has more internal nodes than there are edges in $G$. Nonetheless, we define $\EPTsum(aug(T))$ to be the sum of depths of leaves in $aug(T)$.

\begin{lemma}\label{lemma:augmented}
For any full binary tree $T$, $\EPTsum(aug(T)) \leq \frac{3}{2}\EPTsum(T)$.
\end{lemma}

\begin{proof}
The proof goes by induction. For the base case, we observe that a full binary tree with one leaf has $\EPTsum$ equal to 0, and no edges to augment.

For the inductive step, assume that the lemma holds for any full binary tree with at most $n-1\geq 1$ leaves, and let $T$ be an arbitrary full binary tree with $n$ leaves, and $aug(T)$ the augmented tree of $T$. Furthermore, let $r$ be the root of $T$, and $c_l,c_r$ the children of $r$. Note that $c_l$ and $c_r$ must exist, by the assumption that $T$ is a full binary tree with at least two leaves. Also, note that $aug(T)[c_l]$ (resp. $aug(T)[c_r]$) is the augmented tree $aug(T[c_l])$ (resp. $aug(T[c_r])$).

W.l.o.g. assume that $c_r$ is the child of $r$ such that the edge $rc_r$ is subdivided in $aug(T)$. Then we know that $|L(T[c_r])| \leq \frac{n}{2}$. Also, both $T[c_l]$ and $T[c_r]$ have at most $n-1$ leaves.

By Definition \ref{def:EPT1}, $$\EPTsum(T) = \EPTsum(T[c_l])+\EPTsum(T[c_r])+n,$$ while $$\EPTsum(aug(T)) = \EPTsum(aug(T[c_l]))+\EPTsum(aug(T[c_r]))+n+|L(T[c_r])|,$$ which we have seen to be upper-bounded by $$\EPTsum(aug(T[c_l]))+\EPTsum(aug(T[c_r]))+\frac{3}{2}n.$$ Therefore $$\frac{\EPTsum(aug(T))}{\EPTsum(T)} \leq \frac{\EPTsum(aug(T[c_l]))+\EPTsum(aug(T[c_r]))+\frac{3}{2}n}{\EPTsum(T[c_l])+\EPTsum(T[c_r])+n}.$$
The right hand side fraction is a mediant of the three fractions $\frac{\EPTsum(aug(T[c_l]))}{\EPTsum(T[c_l])}$, $\frac{\EPTsum(aug(T[c_r]))}{\EPTsum(T[c_r])}$ and $\frac{3}{2}$. By the induction hypothesis, all these fractions are upper-bounded by $\frac{3}{2}$, from which we conclude that $\frac{\EPTsum(aug(T))}{\EPTsum(T)} \leq \frac{3}{2}$.
\end{proof}

One should note that this Lemma also holds for EPT's of vertex-weighted trees; this is easily seen by replacing $\frac{n}{2}$ with $\frac{\sum_{v\in V(G)}w_v}{2}$ in the formula.\\

We introduce one additional notion that will come up in the proof of Theorem \ref{thm:approx}:

\begin{definition}[Splitting]\label{def:splitting}
Given a tree $G$ and an EPT $T$ of $G$, and an edge $uv\in E(G)$, the \emph{splitting} of $T$ along $e$ is a pair of rooted trees $T^u,T^v$, which are EPT's of the components of $G\setminus e$, $G_u$ and $G_v$ respectively. $T^u$ and $T^v$ are defined as follows: $L(T^u)$ and $L(T^v)$ are equal to $V(G_u)$ and $V(G_v)$ respectively, and for any node $x\in V(T^u)$ (resp. $V(T^v)$), its parent is equal to the lowest ancestor of $x$ in $T$ whose corresponding edge lies within $G_u$ (resp. $G_v$). If $x$ has no such ancestor, then it becomes the root of $T^u$ (resp. $T^v$).
\end{definition}

\begin{remark}\label{rmk:splitting}
$T^u$ and $T^v$ are, indeed, EPT's of $G_u$ and $G_v$.
\end{remark}
\begin{proof}
Let $T'$ be a subtree of $T$ that includes the root $r$. It is evident that $T'$ induces a partition of $G$ into connected subgraphs, corresponding to the rooted subtrees in $T$ under each leaf of $T'$. We observe that there is maximally one such subgraph of $G$ that intersects both $G_u$ and $G_v$; namely, the subgraph containing $uv$. (Conversely, if $T'$ contains $uv$, then there is no such subgraph of $G$.) Therefore, any internal node $x$ in $T$ whose corresponding edge lies in, say, $G_u$, has exactly two highest descendants whose edges (or vertices, if they are leaves) also lie within $G_u$. These two descendants must in turn correspond to the components of $G_u[L(T_x)]\setminus x$. Therefore, $T^u$ and $T^v$ are EPT's of $G_u$ and $G_v$.
\end{proof}

\begin{lemma}\label{lem:splitting}
Let $G$ be a tree and $T$ an EPT of $G$, and let $T^u,T^v$ be a splitting of $T$ along an edge $uv\in E(G)$. Then, $\EPTsum(G_u,T^u)+\EPTsum(G_v,T^v) < \EPTsum(G,T)$.
\end{lemma}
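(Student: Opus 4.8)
The plan is to work with the edge-indexed formula of Definition \ref{def:EPT1}, since it decomposes $\EPTsum$ into one contribution $x_e\cdot\bigl(\sum_{w\in L(T[e])}w_w\bigr)$ per edge $e$, and the splitting along $uv$ acts transparently on these contributions: we have the disjoint decomposition $E(G)=E(G_u)\sqcup E(G_v)\sqcup\{uv\}$, the sum $\EPTsum(G_u,T^u)$ ranges only over $E(G_u)$, the sum $\EPTsum(G_v,T^v)$ only over $E(G_v)$, and neither sees the edge $uv$. So the whole argument reduces to understanding, for each $e\in E(G_u)$, how $L(T^u[e])$ relates to $L(T[e])$.

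The key structural claim I would establish first is that for every $e\in E(G_u)$ we have $L(T^u[e])=L(T[e])\cap V(G_u)$, and symmetrically $L(T^v[e])=L(T[e])\cap V(G_v)$ for $e\in E(G_v)$. This follows directly from Definition \ref{def:splitting}: the parent of a node in $T^u$ is its lowest ancestor in $T$ lying in $G_u$, so the ancestor relation of $T^u$ is exactly the ancestor relation of $T$ restricted to nodes in $G_u$. Hence, for $e\in E(G_u)$ and $w\in V(G_u)$, $e$ is an ancestor of $w$ in $T^u$ if and only if $e$ is an ancestor of $w$ in $T$; restricting to leaves of $T^u$ (i.e. intersecting with $V(G_u)$) gives the claim. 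That $T^u,T^v$ are genuine EPTs of $G_u,G_v$ is exactly Remark \ref{rmk:splitting}, which I would invoke here.

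With the claim in hand the computation is immediate. Writing $V(G)=V(G_u)\sqcup V(G_v)$, for $e\in E(G_u)$ the term $\sum_{w\in L(T[e])}w_w-\sum_{w\in L(T^u[e])}w_w$ equals $\sum_{w\in L(T[e])\cap V(G_v)}w_w\ge 0$, the weight of the leaves under $e$ sitting on the $v$-side, and symmetrically for $e\in E(G_v)$. Summing over all edges, and noting that $uv$ contributes $x_{uv}\cdot\bigl(\sum_{w\in L(T[uv])}w_w\bigr)$ to $\EPTsum(G,T)$ but to neither split sum, I obtain
$$\EPTsum(G,T)-\EPTsum(G_u,T^u)-\EPTsum(G_v,T^v)=\sum_{e\in E(G_u)}x_e\!\!\sum_{w\in L(T[e])\cap V(G_v)}\!\!w_w\;+\;\sum_{e\in E(G_v)}x_e\!\!\sum_{w\in L(T[e])\cap V(G_u)}\!\!w_w\;+\;x_{uv}\!\!\sum_{w\in L(T[uv])}\!\!w_w.$$

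Every summand on the right is nonnegative, and the last term is strictly positive: $uv$ is an internal node, so $L(T[uv])$ contains at least the two endpoints $u$ and $v$ (one in each component of $G[uv]\setminus uv$), while $x_{uv}>0$ (in the unweighted case $x_{uv}=1$). This yields the strict inequality. I expect the only delicate point to be the structural claim $L(T^u[e])=L(T[e])\cap V(G_u)$ — specifically arguing that the contraction in Definition \ref{def:splitting} preserves exactly the $G_u$-ancestors and discards nothing else; once that is nailed down, the arithmetic is routine and strictness comes for free from the discarded $uv$-term.
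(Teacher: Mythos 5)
Your proposal is correct and follows essentially the same route as the paper: both compare, edge by edge via Definition \ref{def:EPT1}, the leaf set $L(T^u[e])$ against $L(T[e])$ and obtain strictness from the term for $uv$, which appears in $\EPTsum(G,T)$ but in neither split sum. The only difference is that you establish the exact identity $L(T^u[e]) = L(T[e]) \cap V(G_u)$, whereas the paper gets by with the weaker inclusion $L(T^u[e]) \subseteq L(T[e])$; this is a matter of detail, not of approach.
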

\begin{proof}
From Definition \ref{def:EPT1}, we see that it is enough to prove that for any internal node $x$ in, say, $T^u$, $L(T^u_x) \leq L(T_x)$. Due to the way $T^u$ is constructed, every leaf in $T^u_x$ is also a leaf in $T_x$, and the claim immediately follows. The strict inequality stems from the fact that $L(T_{uv}) > 0$, but $uv$ is not present in either $T^u$ or $T^v$.
\end{proof}

\begin{theorem}\label{thm:approx}
The balanced cut-algorithm gives a 1.5-approximation of EPT-sum on trees.
\end{theorem}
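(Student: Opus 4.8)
The plan is to use the augmented tree as a bridge between the optimum and the balanced EPT. Write $T^\ast$ for an optimum EPT of $G$ and $T_{bal}$ for the EPT produced by the balanced cut algorithm. By Lemma \ref{lemma:augmented} we have $\EPTsum(aug(T^\ast))\le \tfrac32\EPTsum(T^\ast)=\tfrac32\EPTsum(G)$, so it suffices to establish the single inequality $\EPTsum(G,T_{bal})\le \EPTsum(aug(T^\ast))$. In fact I would prove the slightly stronger statement that $\EPTsum(G,T_{bal})\le \EPTsum(aug(T))$ holds for \emph{every} EPT $T$ of $G$, since this is exactly the form the induction hands back on the two subproblems. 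The argument is a strong induction on $n=|V(G)|$, with the one-vertex case trivial.

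For the inductive step, let $e$ be the balanced edge chosen at the root of $T_{bal}$, and let $G_u,G_v$ be the components of $G\setminus e$ with $|V(G_v)|\le |V(G_u)|$, so that $|V(G_v)|\le \tfrac{n}{2}$. Using Remark \ref{rmk:splitting} and Lemma \ref{lem:splitting}, split $T$ along $e$ into EPTs $T^u,T^v$ of $G_u,G_v$. Since $T_{bal}$ cuts $e$ at the root and then recurses with the balanced cut algorithm, unfolding Definition \ref{def:EPT1} and applying the induction hypothesis to the smaller trees $G_u$ and $G_v$ (with the EPTs $T^u,T^v$) yields
\[
\EPTsum(G,T_{bal})=n+\EPTsum(G_u,T^{u}_{bal})+\EPTsum(G_v,T^{v}_{bal})\le n+\EPTsum(aug(T^u))+\EPTsum(aug(T^v)).
\]
The whole theorem therefore reduces to the local inequality $n+\EPTsum(aug(T^u))+\EPTsum(aug(T^v))\le\EPTsum(aug(T))$, which compares the augmentation of $T$ with the augmentations of its two pieces and carries no factor $\tfrac32$ at all.

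To prove this local inequality I would argue vertex by vertex using the depth formulation of Definition \ref{def:EPT2}. Each vertex $w$ is a leaf of $aug(T)$ and of the augmentation of the piece containing it, and the quantity $\EPTsum(aug(T))-\EPTsum(aug(T^u))-\EPTsum(aug(T^v))$ accumulates, over all $w$, the ancestors of $w$ that the split destroys: the node $e$ itself, every ancestor of $w$ lying on the opposite side of $e$, and the subdivision nodes sitting above these. The clean warm-up is the case in which $e$ is already the root of $T$: then the two pieces are literally the two child subtrees, the subdivided edge is the one leading to the smaller child $G_v$, and a direct count gives a difference of exactly $|V(G_u)|+2|V(G_v)|=n+|V(G_v)|\ge n$, so the inequality holds with room to spare. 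The naive count suggests that pushing $e$ deeper only creates more destroyed ancestors, and hence can only help.

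The step I expect to be the genuine obstacle is that augmentation is defined relative to subtree sizes, and those sizes change on passing from $T$ to $T^u$ and $T^v$: an internal node that is the \emph{smaller} child in $T$ may become the \emph{larger} child inside its piece (or vice versa), so the subdivided edges of $aug(T^u)$ need not be the restriction of the subdivided edges of $aug(T)$. This is precisely where the balancedness of $e$ must enter. Because $|V(G_v)|\le \tfrac{n}{2}$, the side that can gain a freshly created subdivision from such a flip is the small side, and the honest argument must show that each such extra subdivision is paid for by a destroyed ancestor counted above. I would do this by charging every flipped subdivision to the copy of $e$ (or of an edge lying above $e$) that each affected vertex loses, and verifying that balancedness makes this charging injective; this bookkeeping, rather than any single clean inequality, is the crux of the whole proof and the part I expect to be most delicate to get exactly right. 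Finally, nothing in the count is special to unit weights — replacing each vertex count by the corresponding sum of vertex weights, and $\tfrac{n}{2}$ by $\tfrac12\sum_{v}w_v$, carries the entire argument over to vertex-weighted trees, as recorded in Corollary \ref{cor:weights}.
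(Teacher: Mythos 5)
Your reduction is sound as far as it goes: by Lemma \ref{lemma:augmented} it suffices to show $\EPTsum(G,T_{bal})\le\EPTsum(aug(T))$ for every EPT $T$, and unfolding one level of the balanced algorithm via the splitting $T^u,T^v$ of Definition \ref{def:splitting} correctly reduces this to the local inequality $n+\EPTsum(aug(T^u))+\EPTsum(aug(T^v))\le\EPTsum(aug(T))$. This is a genuinely different route from the paper, which instead transforms $aug(T^*)$ into $T_{bal}$ by a sequence of local rotations, each shown not to increase EPT-sum via a seven-case analysis. But the local inequality is exactly where all the difficulty of the theorem lives, and you have not proved it; worse, the two heuristics you offer about it point in the wrong directions. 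First, the ``naive count'' of destroyed ancestors does \emph{not} suffice and does not ``only help'' as $e$ moves deeper. Take $G$ to be an edge $uv$ with a subtree $A$ of $\alpha$ vertices hung on $u$ by a single edge and a subtree $B$ of $\beta=\alpha$ vertices hung on $v$ by a single edge, so $e=uv$ is the balanced edge and $n=2\alpha+2$. Let $T$ first cut the edge joining $u$ to $A$, then the edge joining $v$ to $B$, then $uv$. The destroyed-ancestor count is $0$ for each vertex of $A$, $1$ for each vertex of $B$, and $2$ for each of $u,v$, totalling $\alpha+4<n$ once $\alpha>2$. Second, the subdivision ``flips'' you worry about can only \emph{help} in aggregate: for each surviving internal node the restricted child-counts are pointwise at most the original ones, so the minimum can only drop and the subdivision cost of $aug(T^u),aug(T^v)$ is termwise at most that of $aug(T)$. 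In the example above it is precisely this drop at the root (its smaller child has $\alpha$ leaves in $T$, but restricted to $G_u$ the smaller side has a single leaf, a gain of $\alpha-1$) that rescues the inequality.

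So the actual content of the missing step is a delicate interplay between three quantities --- the destroyed cross-ancestors, the per-node decrease in subdivision cost, and the balancedness of $e$ (which bounds every off-spine piece of $T$ cut off above $e$ by $|V(G_v)|$) --- and none of these alone suffices. I was unable to find a counterexample to your local inequality, so the strategy may well be salvageable, but establishing it appears to require a case analysis over the structure of the root-to-$e$ path in $T$ of roughly the same order of complexity as the paper's Cases 1--7; in particular Case 3 of the paper (where the balanced edge does not reach the root in one step) is a warning that the single clean split you propose hides real work. As written, the proposal identifies the right target but leaves the theorem unproved.
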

\begin{proof}
Let $G$ be an unrooted tree, with $T^*$ an optimal EPT of $G$, and $T'$ an EPT of $G$ given by the balanced cut-algorithm. We want to show that $\EPTsum(T') \leq \EPTsum(aug(T^*))$. We achieve this through a procedure that incrementally modifies $aug(T^*)$ into $T'$.

We refer to an internal node $v$ in $T_i$ as \emph{correctly placed} when (i) $v$ corresponds to a most balanced edge in $G[(T_i)_v]$, and (ii) every ancestor of $v$ is also correctly placed. Clearly, in an EPT output by the balanced cut algorithm, every node is correctly placed, and vice versa. During the procedure, we generate a series of trees $T_0,T_1,\ldots,T_t$ where $T_0 = aug(T^*)$, $T_t = T'$ and, for every $0\leq i<t$, $T_{i+1}$ is made from $T_i$ by way of the following algorithm:

\begin{itemize}
\item If $T_i = T'$: set $t := i$ and halt.
\item Otherwise: select an internal node $v\in V(T_i)$ that has a subdivided edge to one of its children, but where none of its ancestors have such an edge.
\item Modify $T_i[v]$ in a suitable way, according to Cases 1-7 below (depending on where the balanced edge of $G[v]$ is located in $T_i[v]$).
\item Set $T_{i+1}$ equal to the modified tree.
\end{itemize}

For every $0\leq i < t$, the number of correctly placed nodes in $T_{i+1}$ is at least as high as in $T_i$. When a node has been placed correctly in the EPT, the subdivided edge from it to one of its children is smoothed out; therefore it is gradually modified from an augmented EPT to a non-augmented EPT. The node $v$ that is chosen in each step is a node of minimum depth that is not guaranteed to be correctly placed, since we remove the subdivision node underneath each correctly placed node. Therefore, after replacing $v$ with a node corresponding to a balanced edge, the replacement node is now correctly placed, increasing the number of correctly placed nodes.

The proof hinges on being able to show that the following invariant holds for every $0\leq i\leq t$: $$\EPTsum(G,T_i) \leq \EPTsum(G,aug(T^*))$$ The invariant trivially holds for $i = 0$. The rest of the proof goes by induction; assuming that the invariant holds for some $i$ and showing that it then also must hold for $i+1$. Given the definition $$\EPTsum(G,T) = \sum_{e\in E(G)} \left(\sum_{v\in L(T[e])} w_v\right)$$ we can see that if the modification from $T_i$ to $T_{i+1}$ is constrained to the rooted subtree $T_i[v]$, then $$\EPTsum(G,T_{i+1})-\EPTsum(G,T_i) = \EPTsum(G,T_{i+1}[v])-\EPTsum(G,T_i[v])$$ (this is what Dasgupta~\cite{Das16} refers to as \emph{modularity of cost}).

How to actually modify $T_i$ depends on where the balanced edge is located, and is subject to a lengthy analysis of seven cases, which constitutes the rest of the proof. Each case concentrates on a subtree $T_i[v]$ where $v$ was selected in the manner described above. For visual aid, we include a picture of each case and how that particular tree is modified. We believe that explaining the modification in each case in words does not add any explanatory power over the figures themselves -- specifically, the pictures replace sentences like ``we smooth out these and these edges, perform such and such rotations on the tree, and subdivide those and those edges''. Furthermore, the inferred change in EPT-sum for each case is found by counting up the depth of each leaf in the trees shown in the figures. We will therefore restrict a thorough explanation to Cases 1 and 2.\\

We will employ the following conventions throughout the case analysis: The balanced edge in question is denoted $e$, and its corresponding node in $T_i[v]$ is located in the subtree on the left hand side of the root (except in Case 1). The leftmost and rightmost subsets of leaves (corresponding to subsets of $V(G)$) are called $A$ and $B$, respectively, and the root is called $v$, its left child $l$, and $l$'s right child is called $b$ (except in Cases 1-2). Lastly, $A\cup B$ never forms a connected subgraph of $G$ (again, except in Case 1, where $A\cup B$ is the whole of $L(T_i[v])$). Note that since $e$ lies in the left subtree, the right subtree (containing $B$) must be strictly smaller, and can therefore safely be assumed to be the one having a subdivided edge from the root in $T_i[v]$.

Given these conventions, one should take care to get convinced that in each case, all rooted subtrees (both in the original tree and the modified one) induce connected subgraphs of $G$; this is important, as it is necessary in order for the arguments to hold water.

In the figures, pink dots signify subdivision nodes in the augmented tree. A pink line between two edges signifies that one of those edges is subdivided (whichever leads to the subtree with the fewest leaves), but we do not know which.\\

\textbf{Case 1:} $v = e$.\\

\begin{figure}[ht!]
\centering
\includegraphics[width=0.5\textwidth]{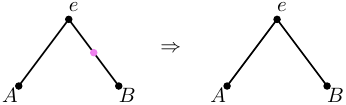}
\caption{Case 1.}
\end{figure}

This is certainly the easiest case. Since $e$ is already correctly placed, the only modification we do is smoothing the subdivided edge from the root. It is trivial to show that $\EPTsum(G,T_{i+1}) \leq \EPTsum(G,T_i)$ and the invariant holds.\\

\textbf{Case 2:} $l=e$.\\

\begin{figure}[ht!]
\centering
\includegraphics[width=0.5\textwidth]{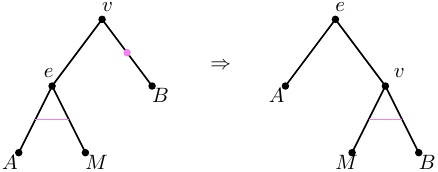}
\caption{Case 2.}
\end{figure}

Both here and in the following cases, we need to remember that $A$ is defined to be separated from $B$ by $M$, therefore $G[B\cup M]$ is connected; as previously mentioned, this is necessary in order for the modified tree to be an EPT of $G[v]$.\\

In this case, we perform a binary tree rotation on $T_i[v]$ around $v$ and $e$, and smooth the subdivided edge from $v$ to $B$, decreasing the EPT-sum by $|B|$ (but the act of rotating around $v$ and $e$ simultaneously increases EPT-sum by $|B|$, so these cancel each other out). We further smooth the edge from $e$ to the smallest subtree of $A$ and $M$ (decreasing EPT-sum by $\min(|A|,|M|)$, and all leaves in $A$ decrease one in depth, decreasing the EPT-sum by $|A|$ (due to the rotation around $v$ and $e$), but we subdivide the edge from $v$ to the smallest subtree out of $M$ and $B$, increasing the EPT-sum with $\min(|B|,|M|)$.

Taking all this into account, the change in EPT-sum therefore simplifies to $$\min(|B|,|M|)-$$ $$(|A|+\min(|A|,|M|))$$ where all increase is collected on the top line and decrease on the bottom line. We observe that the inequality $|A| \geq |B|$ must hold, as otherwise $v$ would lead to a more balanced cut than $e$. It is a basic observation that we use throughout the proof, that if $a\leq b$ and $c\leq d$, then $\min(a,c)\leq \min(b,d)$. In this case, we get $\min(|B|,|M|)\leq |B|\leq |A|$. Therefore the change in EPT-sum is non-positive and $\EPTsum(G,T_{i+1}) \leq \EPTsum(G,T_i)$; thus the invariant holds.\\

\textbf{Case 3:} $e$ lies inside $G[A]$.\\

\begin{figure}[ht!]
\centering
\includegraphics[width=0.5\textwidth]{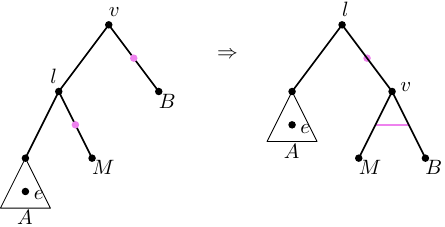}
\caption{Case 3.}
\end{figure}

This is a unique case in the sense that $e$ does not become correctly placed in $T_i[v]$ at this step. Therefore it is important to keep a subdivided edge from the new root $l$ in this case, so that we can balance $T_{i+1}[v]$ at a later step, before moving to its subtrees.

We observe that the inequality $|A| \geq |M|+|B|$ must hold, otherwise $l$ would lead to a more balanced cut than $e$ (this is also why we can be sure that the edge from $l$ to $M$ is subdivided). The change in EPT-sum is $$(|B|+\min(|M|,|B|))-$$ $$|A|.$$ From the aforementioned inequality, the change in EPT-sum is non-positive and therefore $\EPTsum(G,T_{i+1}) \leq \EPTsum(G,T_i)$; thus the invariant holds.\\

In the next three cases (4-6), we assume that $A$ and $B$ lie on different sides of $e$, i.e. that there are connected subgraphs $G[L],G[R]$ where $L\cup R = M$ such that the components of $G[L(T_i[v])]\setminus e$ are $G[A\cup L]$ and $G[B\cup R]$. In the last case, Case 7, we assume the opposite; that $A$ and $B$ are in the same component of $G\setminus e$.\\

\textbf{Case 4:} $e$ is the root of the tree containing $L\cup R$.\\

\begin{figure}[ht!]
\centering
\includegraphics[width=0.5\textwidth]{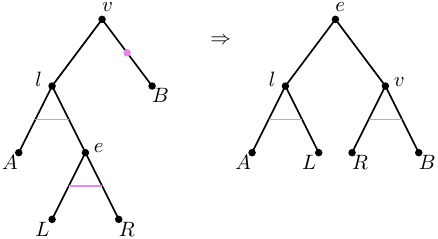}
\caption{Case 4.}
\end{figure}

Here, the change in EPT-sum is $$(\min(|A|,|L|)+\min(|R|,|B|))-$$ $$(\min(|A|,|L\cup R|)+\min(|L|,|R|)+|L|+|R|).$$ As the sum of the top row is clearly at most $|L|+|R|$, the change in EPT-sum is non-positive and $\EPTsum(G,T_{i+1}) \leq \EPTsum(G,T_i)$; thus the invariant holds.\\

\textbf{Case 5:} One of the children of $b$ (containing $e$) contains $R$.\\

\begin{figure}[ht!]
\centering
\includegraphics[width=0.5\textwidth]{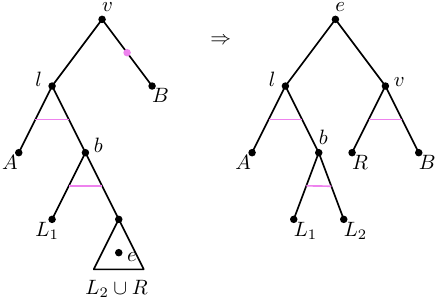}
\caption{Case 5.}
\end{figure}

Both here and in the next case, we look at the splitting of the subtree containing $e$ (in this case $T_i[L_2\cup R]$). By Lemma \ref{lem:splitting}, it is clear that, also when augmented, the two subtrees $T_{i+1}[L_2]$ and $T_{i+1}[R]$ in sum have lower $\EPTsum$ than $T_i[L_2\cup R]$ has. Let us argue in more detail: By Lemma \ref{lem:splitting}, the EPT's $T_{i+1}'[L_2]$ and $T_{i+1}'[R]$ (made by smoothing out all the subdivided edges in $T_{i+1}[L_2]$ and $T_{i+1}[R]$) have, in sum, lower EPT-sum than the EPT $T_i'[L_2\cup R]$ (made by smoothing out the subdivided edges in $T_i[L_2\cup R]$) has. Subdividing the edges again, we see that every subdivided edge from a node $x$ to its child in (say) $T_{i+1}[L_2]$ corresponds to a subdivided edge from $x$ to its child in $T_i[L_2\cup R]$. Therefore augmenting $T_i[L_2\cup R]$ must increase EPT-sum at least as much as augmenting $T_{i+1}[L_2]$ and $T_{i+1}[R]$. So, if we cancel out $\EPTsum(T_{i+1}[L_2]) + \EPTsum(T_{i+1}[R]) - \EPTsum(T_i[L_2\cup R])$, we can infer that the change in EPT-sum is at most $$(\min(|A|,|L|)+\min(|L_1|,|L_2|)+\min(|R|,|B|))-$$ $$(\min(|A|,|L\cup R|)+\min(|L_1|,|L_2\cup R|)+|R|)$$ where we assume $L = L_1\cup L_2$. We see that each term in the top row is at most as high as the corresponding term in the bottom row. As such, the change in EPT-sum is non-positive and $\EPTsum(G,T_{i+1}) \leq \EPTsum(G,T_i)$; thus the invariant holds.\\

\textbf{Case 6:} One of the children of $b$ (containing $e$) contains $L$.\\

\begin{figure}[ht!]
\centering
\includegraphics[width=0.5\textwidth]{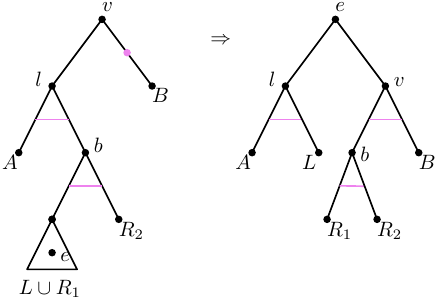}
\caption{Case 6.}\label{fig:case6}
\end{figure}

As in the previous case, we argue by Lemma \ref{lem:splitting} that the sum of the $\EPTsum$ of the splitting $T_{i+1}[L],T_{i+1}[R_1]$ is strictly lower than $\EPTsum(T_i[L\cup R_1])$. Furthermore, we will show that this sum is actually upper-bounded by $\EPTsum(T_i[L\cup R_1])-\min(|L|,|R_1|)$. Look at the root of the subtree $T_i[L\cup R_1]$. Either that root is equal to $e$, in which both $L$ and $R_1$ sit at a lower level in $T_i[v]$ than what is shown in Figure \ref{fig:case6}, or it is not equal to $e$, in which one of the two -- at least the smallest one -- sits at a lower level than what is shown in Figure \ref{fig:case6}. We can therefore add a $\min(|L|,|R_1|)$ to the bottom line of the change in EPT-sum which thus reads $$(\min(|A|,|L|)+\min(|R|,|B|)+\min(|R_1|,|R_2|))-$$ $$(\min(|A|,|L\cup R|)+|L|+\min(|L\cup R_1|,|R_2|)+\min(|L|,|R_1|))$$ where $R = R_1\cup R_2$. This is the most complicated case, and we must employ four sub-cases corresponding to what the bottom line evaluates to.\\

\emph{Case 6a)}: $|A| > |L\cup R|$

In this case, the upper line is upper-bounded by $$(|L|+|R|+\min(|R_1|,|R_2|))$$ and the lower line is lower-bounded by $$(|L|+|R|+\min(|L\cup R_1|,|R_2|))$$ The change in EPT-sum is non-positive.\\

\emph{Case 6b)}: $|A| \leq |L\cup R|$ and $|L| < |R_1|$

In this case, the upper line is upper-bounded by $$(|L|+|B|+\min(|R_1|,|R_2|))$$ and the lower line is lower-bounded by $$(|A|+2|L|+\min(|L\cup R_1|,|R_2|))$$ We observe that the inequality $|A|+|L| \geq |B|$ must hold, otherwise $v$ would lead to a more balanced cut than $e$. Therefore, the change in EPT-sum is non-positive.\\

\emph{Case 6c)}: $|A| \leq |L\cup R|$ and $|L| \geq |R_1|$ and $|L\cup R_1| > |R_2|$

In this case, the upper line is upper-bounded by $$(|A|+|R_1|+|R|)$$ and the lower line is equal to $$(|A|+|L|+|R|)$$ By the definition of this case, the change in EPT-sum is non-positive.\\

\emph{Case 6d)}: $|A| \leq |L\cup R|$ and $|L| \geq |R_1|$ and $|L\cup R_1| \leq |R_2|$

In this case, the upper line is upper-bounded by $$(|L|+|B|+|R_1|)$$ and the lower line is equal to $$(|A|+2|L|+2|R_1|)$$ As, again, $|A|+|L| \geq |B|$, the change in EPT-sum is non-positive.

As we have argued, no matter how the sizes of different subtrees relate to each other, $\EPTsum(G,T_{i+1}) \leq \EPTsum(G,T_i)$ and the invariant holds.\\

In the last case, we assume that subtrees $A$ and $B$ lie on the same side of $e$, and assume w.l.o.g. that the components of $G[L(T_i[v])]\setminus e$ are $G[A\cup B\cup R]$ and $G[L]$.\\

\textbf{Case 7:} $e$ lies within $G[L\cup R]$ and $A$ and $B$ are in the same component of $G\setminus e$.\\

\begin{figure}[ht!]
\centering
\includegraphics[width=0.5\textwidth]{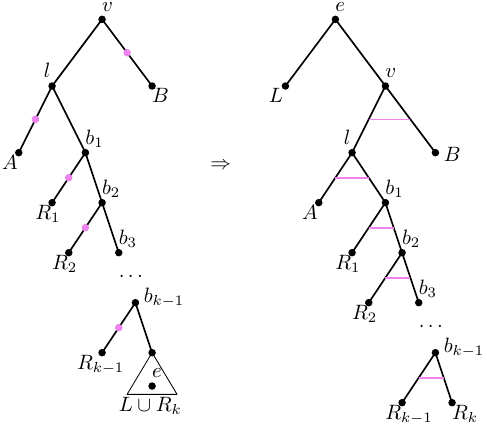}
\caption{Case 7.}\label{fig:case7}
\end{figure}

In this last case, we will look inside the tree covering $L\cup R$ and enumerate each part of $R$, i.e. $R_1,\ldots,R_{k-1}$, that is split off before reaching a cut involving $L$, in order to accurately argue about how the EPT-sum changes when modifying $T$. The crucial point here is that $|L|$ is at least as large as both $|A|$, $|B|$ and $|R_i|$ for every $1\leq i\leq k-1$, otherwise $v$, $l$ or $b_i$ respectively would have led to a more balanced cut than $e$. Looking at Figure \ref{fig:case7}, we reckon that every pink line in $T_{i+1}[e]$ incurs at most as high a EPT-sum as the corresponding pink dot in $T_i[v]$ (clearly, for each $1\leq i<k$, $\min(|R_i|,|L|\sum_{j=i+1}^k |R_j|) \leq |R_i|$; the same argument applies to subdivision nodes going to $A$ and $B$). We therefore disregard the subdivision nodes, and observe that $A$, $B$ and $R_1,\ldots,R_{k-1}$ all increase in depth between $T_i$ and $T_{i+1}$, while $L$ decrease in depth. The change of EPT-sum is therefore upper-bounded by $$(|B|+|A|+\sum_{i=1}^{k-1} |R_i|)-$$ $$(k+1)|L|$$ As all the terms on the upper line, of which there are $k+1$, are at most $|L|$, the change in EPT-sum is non-positive and $\EPTsum(G,T_{i+1}) \leq \EPTsum(G,T_i)$; thus the invariant holds.

With the assumptions we put on the orientation of $T_i[v]$, all possible placements of the balanced edge $e$ are covered by one of the Cases 1-7. The procedure must halt after a finite amount of modifications (specifically, $O(n^2)$ modifications), because each internal node (corresponding to an edge in $G$, of which there are $n-1$) is correctly placed in accordance with $T'$ in $O(n)$ modifications. Specifically, for each subtree, Case 3 may be encountered $O(n)$ times (observe that $e$ has smaller depth in $T_{i+1}$ than in $T_i$ when Case 3 is encountered), while any other case may be encountered maximally one time. From this fact and the fact that the invariant holds, we conclude that the theorem is true.
\end{proof}

\begin{corollary}\label{cor:weights}
Theorem \ref{thm:approx} holds also for vertex-weighted trees.
\end{corollary}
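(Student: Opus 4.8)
The plan is to re-run the argument of Theorem~\ref{thm:approx} essentially verbatim, reinterpreting every cardinality as a weight sum. Writing $w(S) := \sum_{v\in S} w_v$ for a set $S\subseteq V(G)$, I would read the balanced cut algorithm as selecting an edge $e$ minimizing $\max_C w(C)$ over the components $C$ of $G\setminus e$, read the augmented tree as subdividing the edge to the child of smaller \emph{weight} rather than smaller leaf count, and read $\EPTsum$ throughout via the weighted formula of Definition~\ref{def:EPT1}, i.e. with $|L(T[e])|$ replaced by $w(L(T[e]))$. Since that definition is already linear in the leaf weights, the ``modularity of cost'' identity used to localize each modification to the subtree $T_i[v]$ holds unchanged, and the target chain $\EPTsum(G,T') \leq \EPTsum(G,aug(T^*)) \leq \tfrac{3}{2}\EPTsum(G,T^*)$ is exactly what we must reprove.

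First I would confirm that the two supporting lemmas survive the reinterpretation. For Lemma~\ref{lemma:augmented} this is the content of the remark following its proof: replacing $\tfrac{n}{2}$ by $\tfrac{W}{2}$ with $W=w(V(G))$, the mediant argument goes through with no other change. For Lemma~\ref{lem:splitting} the proof is purely structural: every leaf of $T^u_x$ is a leaf of $T_x$, so $w(L(T^u_x)) \leq w(L(T_x))$ by monotonicity of $w$ over subsets, and (assuming strictly positive weights) the omitted node $uv$ carries weight $w(L(T_{uv})) > 0$, yielding the same strict inequality; if zero weights are permitted the bound degrades to non-strict, which is all the case analysis actually uses.

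Next I would revisit the seven-case analysis. Each case computes $\EPTsum(G,T_{i+1}) - \EPTsum(G,T_i)$ as a signed sum of subtree sizes and concludes non-positivity from (a) the abstract monotonicity rule that $a\leq b$ and $c\leq d$ imply $\min(a,c)\leq \min(b,d)$, and (b) a few structural inequalities, namely $w(A)\geq w(B)$ (Case 2), $w(A)\geq w(M)+w(B)$ (Case 3), $w(A)+w(L)\geq w(B)$ (Cases 6b, 6d), and $w(L)\geq \max\{w(A),w(B),w(R_i)\}$ (Case 7). Every one of these is justified in the unweighted proof by the phrase ``otherwise [some node] would lead to a more balanced cut than $e$''; since the weighted balanced edge minimizes the maximum component weight, precisely the same comparison of component weights establishes each inequality. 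Rule (a) is a fact about nonnegative reals and so applies verbatim. The per-case arithmetic is likewise unchanged: each modification shifts whole subtrees, so within every block $A,B,M,\dots$ all leaves change depth by the same integer, and the signed sum of block cardinalities becomes the identical signed sum with each $|\cdot|$ replaced by $w(\cdot)$.

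The main obstacle is therefore not conceptual but bookkeeping: I would go through the list of structural inequalities invoked across the seven cases and verify that each is genuinely an assertion that no single edge cut of $G[v]$ beats $e$ on maximum component weight, so that it transfers directly. The only two spots that demand a moment's care are the strictness in Lemma~\ref{lem:splitting} and the level-shifting refinement in Case~6 (the extra $\min(|L|,|R_1|)$ term); both rest on the same subset-inclusion and balancedness facts, which are weight-agnostic. Once these are checked, the induction on the invariant and the halting argument (Case~3 encountered $O(n)$ times per subtree, every other case at most once) are purely combinatorial and independent of the weights, so $\EPTsum(G,T') \leq \tfrac{3}{2}\EPTsum(G,T^*)$ holds for vertex-weighted trees.
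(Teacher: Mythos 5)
Your proposal is correct and follows essentially the same route as the paper: the paper's own proof of Corollary~\ref{cor:weights} simply observes that replacing cardinalities $|A|, |B|, |M|$ by weight sums preserves all the inequalities, since a balanced edge in a vertex-weighted tree is balanced with respect to weights. Your version is just a more carefully itemized execution of that same substitution argument (including the observation, also made in the paper after Lemma~\ref{lemma:augmented}, that $\frac{n}{2}$ becomes $\frac{W}{2}$).
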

\begin{proof}
This follows directly from replacing sizes of subsets of vertices ($|A|$, $|B|$, $|M|$ etc.) with their weighted equivalents ($\sum_{a\in A} w_a$, $\sum_{b\in B} w_b$ etc.). All the inequalities used in the proof must still hold, since a balanced edge in a vertex-weighted tree is defined to be balanced with respect to the weights.
\end{proof}

\section{Conclusion}

We have shown that the simple balanced cut algorithm gives a 1.5-approximation of EPT-sum on vertex-weighted trees, and therefore of the minimum average search time in trees.\\

There are some questions remaining. For the case of weighted trees, Cicalese et al.~\cite{CJLM14} showed that for the balanced cut algorithm, an approximation ratio of 1.5 must be the best possible, as for any lower ratio, there is an infinite family of graphs failing to achieve that ratio by the balanced cut algorithm. On the other hand, for unweighted trees, we have no such lower bounds. The highest ratio found is by taking the smallest tree in the family found in \cite{HBBPT21}, a tree with 14 vertices; calculating the EPT-sum of the optimal EPT and the one found by taking balanced cuts, the approximation ratio turns out to be $\frac{65}{58}$ for this tree (for the family as a whole, the ratio is $(1+o(1))$). We leave it as an open problem to close the gap between $\frac{65}{58}$ and 1.5 for unweighted trees.\\

It must also be noted that for calculating EPT-sum on unweighted trees, it is still unknown whether the problem is NP-hard or polynomial-time solvable. For the related parameter of edge ranking (equivalent to minimum worst-case search time in trees), there are polynomial- and even linear-time algorithms~\cite{BFN99,de1995optimal,LamYue2001}, but these are non-trivial. If EPT-sum of unweighted trees is solvable in polynomial time, then there are few reasons to believe that an algorithm should be any easier to find.\\

Finally, in this paper we did not focus on edge-weighted trees. It is not hard to show that Lemma \ref{lemma:augmented} still holds for trees with weights on both vertices and edges, by defining the augmented EPT to have a weight on each of their subdivision nodes equal to that of its parent (corresponding to an edge in the input tree). On the other hand, as the sparsest cut of an edge-weighted tree is more complex than the balanced cut of an unweighted (or vertex-weighted) tree, we did not try to adapt the proof of Theorem \ref{thm:approx} to the setting of edge-weighted trees. Of course, the approximation ratio for trees that are both edge- and vertex-weighted cannot be better than that of trees that are only vertex-weighted, but we do not know if there are any strictly larger lower bounds for this case. Cicalese et al.~\cite{CKLPV2016} have looked at this problem and provided hardness results for some very restricted classes of trees, as well as an $O(\frac{\log n}{\log \log n})$-approximation algorithm. It does not contradict their results if the balanced cut algorithm finds a constant approximation also in this case: in fact, for trees that are only edge-weighted, the previously mentioned results of Charikar and Chatziafratis~\cite{CC17} imply that this algorithm should give a constant (8) approximation factor. Whether the actual factor for edge-weighted trees (with or without vertex weights) is as low as 1.5 is an interesting question, and a natural next step for anyone interested in the average performance of searching in trees.

\bibliography{ref}

\begin{thebibliography}{10}

\bibitem{abboud2020new}
A.~Abboud, V.~Cohen{-}Addad, and P.~N. Klein.
\newblock New hardness results for planar graph problems in p and an algorithm for sparsest cut.
\newblock In {\em Proccedings of the 52nd Annual {ACM} {SIGACT} Symposium on Theory of Computing, {STOC} 2020}, pages 996--1009, 2020.

\bibitem{BFN99}
Yosi Ben-Asher, Eitan Farchi, and Ilan Newman.
\newblock Optimal search in trees.
\newblock {\em SIAM Journal on Computing}, 28(6):2090--2102, 1999.

\bibitem{berendsohn2022fast}
B.A. Berendsohn, I.~Golinsky, H.~Kaplan, and L.~Kozma.
\newblock Fast approximation of search trees on trees with centroid trees, 2022.
\newblock arXiv preprint [cs.DS].

\bibitem{BK22}
B.A. Berendsohn and L.~Kozma.
\newblock Splay trees on trees.
\newblock In {\em Proceedings of the 2022 Annual ACM-SIAM Symposium on Discrete Algorithms (SODA)}, pages 1875--1900, 2022.

\bibitem{BP1993}
Jean~RS Blair and Barry Peyton.
\newblock An introduction to chordal graphs and clique trees.
\newblock In {\em Graph theory and sparse matrix computation}, pages 1--29. Springer, 1993.

\bibitem{CC17}
M.~Charikar and V.~Chatziafratis.
\newblock Approximate hierarchical clustering via sparsest cut and spreading metrics.
\newblock In {\em Proceedings of the 28'th Annual ACM-SIAM symposium on Discrete algorithms (SODA)}, pages 841--854, 2017.

\bibitem{CJLM11}
F.~Cicalese, T.~Jacobs, E.~Laber, and M.~Molinaro.
\newblock On the complexity of searching in trees and partially ordered structures.
\newblock {\em Theoretical Computer Science}, 412(50):6879--6896, 2011.

\bibitem{CJLM14}
F.~Cicalese, T.~Jacobs, E.~Laber, and M.~Molinaro.
\newblock Improved approximation algorithms for the average-case tree searching problem.
\newblock {\em Algorithmica}, 68(4):1045--1074, 2014.

\bibitem{CKLPV2016}
Ferdinando Cicalese, Bal{\'a}zs Keszegh, Bernard Lidick{\`y}, D{\"o}m{\"o}t{\"o}r P{\'a}lv{\"o}lgyi, and Tom{\'a}{\v{s}} Valla.
\newblock On the tree search problem with non-uniform costs.
\newblock {\em Theoretical Computer Science}, 647:22--32, 2016.

\bibitem{CKM+19}
V.~Cohen-Addad, V.~Kanade, F.~Mallmann-Trenn, and C.~Mathieu.
\newblock Hierarchical clustering: Objective functions and algorithms.
\newblock {\em Journal of ACM}, 66(4):26:1--26--42, 2019.

\bibitem{Das16}
S.~Dasgupta.
\newblock A cost function for similarity-based hierarchical clustering.
\newblock In {\em Annual ACM symposium on Theory of Computing (STOC)}, pages 118--127, 2016.

\bibitem{de1995optimal}
P.~de~la Torre, R.~Greenlaw, and A.~A Sch{\"a}ffer.
\newblock Optimal edge ranking of trees in polynomial time.
\newblock {\em Algorithmica}, 13(6):592--618, 1995.

\bibitem{Der2008}
Dariusz Dereniowski.
\newblock Edge ranking and searching in partial orders.
\newblock {\em Discrete Applied Mathematics}, 156(13):2493--2500, 2008.
\newblock Fifth International Conference on Graphs and Optimization.

\bibitem{hib1962}
Thomas~N Hibbard.
\newblock Some combinatorial properties of certain trees with applications to searching and sorting.
\newblock {\em Journal of the ACM (JACM)}, 9(1):13--28, 1962.

\bibitem{HBBPT21}
Svein H{\o}gemo, Benjamin Bergougnoux, Ulrik Brandes, Christophe Paul, and Jan~Arne Telle.
\newblock On dasgupta's hierarchical clustering objective and its relation to other graph parameters.
\newblock In {\em Fundamentals of Computation Theory - 23rd International Symposium, ({FCT} 2021)}, pages 287--300. Springer, 2021.

\bibitem{HT71}
T.C. Hu and A.C. Tucker.
\newblock Optimal computer search trees and variable-length alphabetical codes.
\newblock {\em SIAM Journal on Applied Mathematics}, 21(4):514--532, 1971.

\bibitem{IYER199143}
A.V. Iyer, H.D. Ratliff, and G.~Vijayan.
\newblock On an edge ranking problem of trees and graphs.
\newblock {\em Discrete Applied Mathematics}, 30(1):43--52, 1991.

\bibitem{knu1971}
Donald~E. Knuth.
\newblock Optimum binary search trees.
\newblock {\em Acta informatica}, 1:14--25, 1971.

\bibitem{laber2011approximation}
Eduardo Laber and Marco Molinaro.
\newblock An approximation algorithm for binary searching in trees.
\newblock {\em Algorithmica}, 59(4):601--620, 2011.

\bibitem{LN04}
Eduardo~S Laber and Loana~Tito Nogueira.
\newblock On the hardness of the minimum height decision tree problem.
\newblock {\em Discrete Applied Mathematics}, 144(1-2):209--212, 2004.

\bibitem{LamYue2001}
T.~W. Lam and F.~L. Yue.
\newblock Optimal edge ranking of trees in linear time.
\newblock {\em Algorithmica}, 30(1):12--33, 2001.

\bibitem{LR76}
Hyafil Laurent and Ronald~L Rivest.
\newblock Constructing optimal binary decision trees is np-complete.
\newblock {\em Information processing letters}, 5(1):15--17, 1976.

\bibitem{leiserson1980area}
Charles~E Leiserson.
\newblock Area-efficient graph layouts (for vlsi).
\newblock In {\em 21st Annual Symposium on Foundations of Computer Science (sfcs 1980)}, pages 270--281. IEEE, 1980.

\bibitem{liu1990}
Joseph~WH Liu.
\newblock The role of elimination trees in sparse factorization.
\newblock {\em SIAM journal on matrix analysis and applications}, 11(1):134--172, 1990.

\bibitem{Nesetril015}
J.~Ne{\v s}et{\v r}il and P.~Ossona~de Mendez.
\newblock On low tree-depth decompositions.
\newblock {\em Graphs and Combinatorics}, 31(6):1941--1963, 2015.

\bibitem{NESETRIL20061022}
J.~Nešetřil and P.~{Ossona de Mendez}.
\newblock Tree-depth, subgraph coloring and homomorphism bounds.
\newblock {\em European Journal of Combinatorics}, 27(6):1022--1041, 2006.

\bibitem{NR1972}
J{\"u}rg Nievergelt and Edward~M Reingold.
\newblock Binary search trees of bounded balance.
\newblock In {\em Proceedings of the fourth annual ACM symposium on Theory of computing}, pages 137--142, 1972.

\bibitem{RoyPokutta}
A.~Roy and S.~Pokutta.
\newblock Hierarchical clustering via spreading metrics.
\newblock In {\em Proceedings of the 30th International Conference on Neural Information Processing Systems}, NIPS'16, page 2324–2332, 2016.

\end{thebibliography}

\end{document}